\theoremstyle{plain}
\newtheorem{theorem}{Theorem}
\theoremstyle{definition}
\newtheorem{definition}[theorem]{Definition}
\theoremstyle{remark}
\newtheorem{remark}[theorem]{Remark}
\theoremstyle{plain}
\newtheorem{lemma}[theorem]{Lemma}
\newtheorem{claim}[theorem]{Claim}
\newtheorem{proposition}[theorem]{Proposition}
\DeclareMathOperator*\Exp{\bf E}
\DeclareMathOperator*\Prob{\bf Pr}
\newcommand{\prn}[1]{\left(#1\right)}
\newcommand{\cprn}[1]{\!\left(#1\right)}
\newcommand{\sqbra}[1]{\left[#1\right]}
\newcommand{\csqbra}[1]{\!\left[#1\right]}
\newcommand{\prnsq}[1]{\left(#1\right]}
\newcommand{\brkts}[1]{\left\{#1\right\}}
\newcommand{\cbrkts}[1]{\!\left\{#1\right\}}
\newcommand{\bool}{\brkts{0,1}}
\newcommand{\cpoly}[1]{\mathrm{poly}\cprn{#1}}
\newcommand{\cpr}[1]{\Prob\csqbra{#1}}
\newcommand{\cprq}[2]{\Prob_{#2}\csqbra{#1}}
\newcommand{\cexptq}[2]{\Exp_{#2}\csqbra{#1}}
\newcommand{\Bern}{\mathrm{Bern}}
\newcommand{\NP}{\mathsf{NP}}
\newcommand{\RP}{\mathsf{RP}}
\newcommand{\SZK}{\mathsf{SZK}}
\newcommand{\R}{\mathbb{R}}
\newcommand{\dtv}{d_{\mathrm{TV}}}
\newcommand{\stv}{S_{\mathrm{stat}}}
\newcommand{\ignore}[1]{}
\newcommand{\SIM}{\stv}
\renewcommand{\P}{\mathsf{P}}
\title{{\bf Algorithms and Hardness for Estimating Statistical Similarity}}
\author{Arnab Bhattacharyya
\\
University of Warwick
\and
Sutanu Gayen \\
IIT Kanpur
\and
Kuldeep S. Meel \\
University of Toronto \\
Georgia Institute of Technology
\and
Dimitrios Myrisiotis \\
CNRS@CREATE LTD.
\and
A. Pavan \\
Iowa State University
\and
N. V. Vinodchandran \\
University of Nebraska-Lincoln}
\begin{document}

\maketitle

\begin{abstract}
We introduce and study the computational problem of determining statistical similarity between probability distributions.
For distributions $P$ and $Q$ over a finite sample space, their statistical similarity is defined as $S_{\mathrm{stat}}(P, Q) := \sum_x \min(P(x), Q(x))$.
Despite its fundamental nature as a measure of similarity between distributions, capturing essential concepts such as Bayes error in prediction and hypothesis testing, this computational problem has not been previously explored.
Recent work on computing statistical distance has established that, somewhat surprisingly, even for the simple class of product distributions, exactly computing statistical similarity is $\#\mathsf{P}$-hard.
This motivates the question of designing approximation algorithms for statistical similarity.
Our first contribution is a Fully Polynomial-Time deterministic Approximation Scheme (FPTAS) for estimating statistical similarity between two product distributions.
Furthermore, we also establish a complementary hardness result.
In particular, we show that it is $\mathsf{NP}$-hard to estimate statistical similarity when $P$ and $Q$ are Bayes net distributions of in-degree $2$.
\end{abstract}

\section{Introduction}

\label{sec:introduction}

Given two distributions $P$ and $Q$ over a finite sample space, their 
statistical similarity, denoted $\stv(P, Q)$, is defined as
\begin{align}
\stv(P, Q) := \sum_{x} \min(P(x), Q(x)).
\label{eqn:defn-stv}
\end{align}
Statistical similarity serves as a fundamental measure in machine 
learning and statistical inference.
We defer a detailed discussion of motivating applications to \Cref{sec:apps}.

When the sample space is small, computing $\stv$ is trivial.
However, for high-dimensional distributions, this computation presents significant challenges.
Surprisingly, recent work~\cite{bhattacharyya2023approximating} has established that computing $\stv$ is $\#\P$-hard even for the simple class of product distributions.
This hardness result is striking given that product distributions represent one of the most basic high-dimensional distribution classes, where each dimension is independent of other dimensions.
The hardness of this elementary case raises fundamental questions about the computational nature of statistical similarity:
Can we develop efficient approximation algorithms for classes of distributions of interest? In general, what is the boundary between tractable and intractable similarity computation?

The primary contribution of this work is to initiate a principled investigation of the computational aspects of statistical similarity, identifying both tractable and intractable scenarios.
Our first contribution is a Fully Polynomial-Time deterministic Approximation Scheme (FPTAS) for estimating $\stv$ between product distributions.
To complement this algorithmic result, we establish sharp computational 
boundaries by proving that approximating $\stv$ becomes $\NP$-hard even for slightly more general distributions.
Specifically, we show that the problem is $\NP$-hard to approximate for Bayes net distributions with in-degree $2$.
Note that we work in a computational setting, where the algorithms have access to a succinct description of distributions.

\subsection{Motivating Applications}

\label{sec:apps}

Statistical similarity plays a central role across multiple domains in machine learning and statistics.
We examine three key applications of statistical similarity:
Its connection to Bayes error in prediction problems, its role in characterizing optimal decision rules in hypothesis testing, and its interpretation through coupling theory.
These applications demonstrate the significance of $\stv$.

Statistical similarity arises naturally in the analysis of prediction problems through the notion of {\em Bayes error}.
Consider a binary prediction problem defined by a distribution $P$ over $X \times \{0,1\}$, where $X$ is a finite feature space.
When a classifier $g : X \to \{0,1\}$ attempts to predict the label, it incurs a $0$-$1$ prediction error measured as $\Prob_{(x,y)\sim P}\csqbra{g(x) \neq y}$.
The {\em Bayes optimal classifier}, which outputs $1$ if and only if $P(x|1) > P(x|0)$, achieves the minimum possible error $R^*$, known as the Bayes error.
This error represents a fundamental lower bound that no classifier can surpass.
The connection to statistical similarity manifests through a precise mathematical relationship:
For any prediction problem, the Bayes error exactly equals the statistical similarity between its scaled likelihood distributions.
Specifically, if we denote the prior probabilities $P(0)$ and $P(1)$ by $\alpha_0$ and $\alpha_1$ respectively, then $R^* = \stv(\alpha_0 P(0|X), \alpha_1 P(1|X))$ (a proof is given in \Cref{sec:Bayes-error}), where $\alpha_iP(i|X)$ represents the sub-distribution obtained by scaling $P(i|X)$ with $\alpha_i$.

The relationship between statistical similarity and optimal decision-making extends beyond prediction problems to the domain of hypothesis testing~\citep{lehman2008testing,nielsen2014generalized}.
This setting is particularly relevant to our computational focus, as it deals with known distributions representing null and alternate hypotheses.
A recent result~\cite{kontorovich2024aggregation} establishes how statistical similarity between {\em product distributions} determines the optimal error in hypothesis testing~\cite{parisi2014ranking,berend2015finite}.
To illustrate this connection, consider a hypothesis testing game where a random bit $Y \in \{0,1\}$ is drawn with bias $p_1$ (letting $p_0 = 1-p_1$), followed by an i.i.d. sequence $X_1,\ldots,X_n$ where each $X_i \in \{0,1\}$ satisfies $\Prob[X_i = 1|Y = 1] = \psi_i$ and $\Prob[X_i = 1|Y = 0] = \eta_i$ for parameters $\psi,\eta \in (0,1)^n$.
The optimal decision rule $f^{\sf OPT}: \{0,1\}^n \to \{0,1\}$ that minimizes $\Prob[f^{\sf OPT}(X) \neq Y]$ achieves an error rate of $\stv(p_1\Bern(\psi), p_0\Bern(\eta))$, where $\Bern(\psi)$ denotes the product distribution of individual $\Bern(\psi_i)$ distributions.

These theoretical connections have significant practical implications.
Since Bayes error represents the theoretically optimal performance limit, statistical similarity serves as a benchmark for the evaluation of machine learning models.
This capability has spurred extensive research in estimating Bayes error and statistical similarity~\cite{Fukunaga1975,Devijver1985,Noshad2019learning,theisen2021evaluating,IYCNSI23}.

Statistical similarity can be interpreted through coupling theory. For distributions $P$ and $Q$, a coupling is a distribution  
$(X,Y)$ where $X \sim P$ and $Y \sim Q$.
It is known that $\stv(P,Q)$ equals the maximum over all couplings $(X,Y)$, $\Prob(X=Y)$.
Coupling theory, introduced by Doeblin~\cite{Doe38}, has led to important results in computer science and mathematics~\cite{Lin02,LevinPeresWilmer2006,MT12}.
Finally, statistical similarity admits a characterization in the form of statistical distance (also known as total variation distance) $\dtv$, defined as $\dtv(P,Q) := \max_{S \subseteq D}(P(S)-Q(S)) = \frac{1}{2}\sum_{x \in D}|P(x)-Q(x)|$.
The identity $\stv(P,Q) = 1-\dtv(P,Q)$, known as Scheff\'e's identity, establishes a duality (see \Cref{sec:dtv_stv}).

\subsection{Paper Organization}

We present some necessary background material in \Cref{sec:preliminaries}.
We then present a survey of related work in \Cref{sec:relatedwork}. \Cref{sec:results} describes our primary contributions.
\Cref{sec:fptas} is dedicated to our algorithmic result.
The proof of $\NP$-hardness of estimating the statistical similarity between in-degree $2$ Bayes net distributions is provided in \Cref{sec:simbayes}.
\Cref{sec:conclusion} gives some concluding remarks.
\Cref{sec:Bayes-error} discusses the connections between Bayes error and statistical similarity.
Similarly, \Cref{sec:dtv_stv} elaborates on the connection between TV distance and statistical similarity.
\Cref{sec:help} contains the proof of \Cref{clm:help}, used in the proof of \Cref{thm:main}.

\section{Preliminaries}

\label{sec:preliminaries}

We use $\sqbra{n}$ to denote the set $\brkts{1,\dots,n}$.
We will use $\log$ to denote $\log_2$.
The following notion of a deterministic approximation algorithm is important in this work.

\begin{definition}
\label{def:FPTAS}
A function $f: \bool^* \to \R$ admits a \emph{fully polynomial-time approximation scheme (FPTAS)} if there is a {\em deterministic} algorithm $\mathcal{A}$ such that for every input $x$ (of length $n$) and $\varepsilon >0$, the algorithm $\mathcal{A}$ outputs a $\prn{1 + \varepsilon}$-multiplicative approximation to $f(x)$, i.e., a value that lies in the interval $\sqbra{f(x) / (1 + \varepsilon), (1 + \varepsilon) f(x)}$.
The running time of $\mathcal{A}$ is $\cpoly{n, {1/\varepsilon}}$.
\end{definition}

\begin{definition}
Given two distributions $P$ and $Q$ over a finite sample space $D$, the statistical similarity between $P$ and $Q$ is $\stv(P, Q) := \sum_{x \in D} \min(P(x), Q(x))$.
\end{definition}

A product distribution $P$ over $\sqbra{\ell}^n$ can be described by $n$ functions $p_1, \dots, p_n$ such that $p_i\cprn{x} \in [0,1]$ is the probability that the $i$-th coordinate equals $x \in \sqbra{\ell}$.
For any $y\in \sqbra{\ell}^n$, the probability of $y$ with respect to $P$ is given by $P\cprn{y} = \prod_{i = 1}^n p_i\cprn{y_i}$.

We require the following.

\begin{proposition}[See also Lemma 3 in~\cite{kontorovich2012obtaining}]
\label{prop:sim-lower-bound}
For product distributions $P = P_1 \otimes \dots \otimes P_n$ and $Q = Q_1 \otimes \dots \otimes Q_n$, it is the case that $\SIM\cprn{P, Q} \geq \prod_{i = 1}^n \SIM\cprn{P_i, Q_i}$.
\end{proposition}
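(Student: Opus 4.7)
The plan is to reduce the statement to a clean pointwise inequality on non-negative reals, and then use the distributivity of sum over product to turn a sum of products into a product of sums. Concretely, I will first prove the following elementary lemma: for any non-negative reals $a_1, b_1, a_2, b_2$, we have
\[
\min\cprn{a_1 a_2,\, b_1 b_2} \;\geq\; \min\cprn{a_1, b_1}\cdot \min\cprn{a_2, b_2}.
\]
To see this, split into the four cases determined by whether $a_i \leq b_i$ or $a_i > b_i$ for $i=1,2$. In each case, the right-hand side can be directly compared to each of $a_1 a_2$ and $b_1 b_2$ using monotonicity of multiplication on non-negative reals, and one checks that it is bounded above by both, hence by their minimum.

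The second step is to iterate this two-factor inequality inductively on $n$, yielding
\[
\min\cbrkts{\prod_{i=1}^n P_i\cprn{y_i},\ \prod_{i=1}^n Q_i\cprn{y_i}} \;\geq\; \prod_{i=1}^n \min\cprn{P_i\cprn{y_i},\, Q_i\cprn{y_i}}
\]
for every $y = \cprn{y_1, \ldots, y_n} \in \sqbra{\ell}^n$. The inductive step just peels off one coordinate and invokes the two-factor inequality with $a_1 = P_1\cprn{y_1}$, $b_1 = Q_1\cprn{y_1}$, and $a_2, b_2$ the remaining products.

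The final step is to sum the displayed inequality over all $y \in \sqbra{\ell}^n$. Since $P\cprn{y} = \prod_i P_i\cprn{y_i}$ and $Q\cprn{y} = \prod_i Q_i\cprn{y_i}$, the left-hand side sums to $\SIM\cprn{P,Q}$ by definition. For the right-hand side, distributivity of sum over product (the same identity that makes $\sum_y \prod_i f_i\cprn{y_i} = \prod_i \sum_{y_i} f_i\cprn{y_i}$) rewrites it as
\[
\prod_{i=1}^n \sum_{y_i \in \sqbra{\ell}} \min\cprn{P_i\cprn{y_i},\, Q_i\cprn{y_i}} \;=\; \prod_{i=1}^n \SIM\cprn{P_i, Q_i}.
\]
Combining these yields $\SIM\cprn{P,Q} \geq \prod_{i=1}^n \SIM\cprn{P_i, Q_i}$, as desired.

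I do not expect a genuine obstacle here: the proof is essentially the pointwise min-of-products inequality together with a standard factorization of the sum over a product set. The only mildly delicate point is the case analysis in the base lemma, but it is purely routine and handled by observing that $\min\cprn{a,b}\min\cprn{c,d}$ is dominated by both $ac$ and $bd$ whenever $a,b,c,d \geq 0$.
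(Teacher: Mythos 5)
Your proof is correct, but it takes a genuinely different route from the paper. The paper argues via couplings: it takes an optimal coupling $\mathcal{O}_i$ of each pair $(P_i, Q_i)$, forms the product coupling, and uses the fact that $\SIM\cprn{P,Q}$ equals the maximum of $\Prob[X = Y]$ over all couplings, so that the product coupling gives a lower bound $\prod_i \SIM\cprn{P_i, Q_i}$. Your argument instead works directly from the definition $\SIM\cprn{P,Q} = \sum_x \min\cprn{P(x), Q(x)}$, via the pointwise inequality $\min\cprn{a_1 a_2, b_1 b_2} \geq \min\cprn{a_1, b_1}\min\cprn{a_2, b_2}$ (which, as you note at the end, needs no case analysis: the right-hand side is at most $a_1 a_2$ and at most $b_1 b_2$ by monotonicity of multiplication on nonnegative reals), followed by induction and the factorization $\sum_y \prod_i f_i(y_i) = \prod_i \sum_{y_i} f_i(y_i)$. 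Your version is more elementary and self-contained, since the paper's proof leans on the maximal-coupling characterization of $\SIM$, which is only asserted (not proved) in its introduction; the coupling proof, on the other hand, is shorter once that characterization is granted and makes the probabilistic content of the inequality transparent. Both are valid; yours arguably fits the paper's purely computational setting better, as every step is a finite-sum manipulation.
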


\begin{proof}
We will utilize a coupling argument.
Let $\mathcal{O} = (X,Y)$ be an {\em optimal} coupling between $P$ and $Q$, i.e., $\Prob_\mathcal{O}[X=Y] \geq \Prob_\mathcal{C}[X=Y]$ for any coupling $\mathcal{C}$. 
Thus, $\cprq{X = Y}{\mathcal{O}} = \SIM\cprn{P, Q}$ (as mentioned in \Cref{sec:apps}).
For $1\leq i\leq n$, let $\mathcal{O}_i = \prn{X_i, Y_i}$ be an optimal coupling between $P_i$ and $Q_i$.
That is, $X_i \sim P_i$, $Y_i \sim Q_i$ and $\cprq{X_i = Y_i}{\mathcal{O}_i} = \SIM\cprn{P_i, Q_i}$.
Let $O'$ be the coupling given by the product of $O_i$'s.
Then
\[
\SIM\cprn{P, Q}
= \cprq{X = Y}{\mathcal{O}}
\geq \cprq{X = Y}{\mathcal{O'}}
= \prod_{i = 1}^n \cprq{X_i = Y_i}{\mathcal{O}_i}
= \prod_{i = 1}^n \SIM\cprn{P_i, Q_i}.
\qedhere
\]
\end{proof}

\begin{proposition}
\label{prop:prod-sim-lower-bound}
Let $P = P_1 \otimes \dots \otimes P_n$ and $Q = Q_1 \otimes \dots \otimes Q_n$ be product distributions over $\sqbra{\ell}^n$.
Let $\tau_P$ be a lower bound on $P_i\cprn{x}$ for any $i$ and $x \in \sqbra{\ell}$ whereby $P_i(x)$ is nonzero.
Similarly define $\tau_Q$ and let $\tau := \min\cprn{\tau_P, \tau_Q}$.
Then if $\prod_{i = 1}^n \SIM\cprn{P_i, Q_i} > 0$, then it is the case that $\prod_{i = 1}^n \SIM\cprn{P_i, Q_i} \geq \tau^n$.
\end{proposition}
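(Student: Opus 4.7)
The plan is to observe that the claim reduces to a per-coordinate statement: for every $i$, if $\SIM(P_i,Q_i)>0$ then $\SIM(P_i,Q_i)\geq \tau$. Once this is in hand, the hypothesis $\prod_{i=1}^n \SIM(P_i,Q_i)>0$ forces each factor to be strictly positive, and multiplying the $n$ per-coordinate lower bounds yields the desired $\tau^n$.

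To establish the per-coordinate bound, I would unfold the definition
\[
\SIM(P_i,Q_i) = \sum_{x\in[\ell]} \min(P_i(x), Q_i(x)).
\]
If this sum is positive, then at least one term $\min(P_i(x^*), Q_i(x^*))$ must itself be positive for some $x^*\in[\ell]$. Positivity of that minimum means \emph{both} $P_i(x^*)>0$ and $Q_i(x^*)>0$. By the hypothesis on $\tau_P$ (a lower bound on all \emph{nonzero} values of the coordinate marginals of $P$), we get $P_i(x^*)\geq \tau_P$, and analogously $Q_i(x^*)\geq \tau_Q$. Hence
\[
\SIM(P_i,Q_i) \geq \min(P_i(x^*), Q_i(x^*)) \geq \min(\tau_P, \tau_Q) = \tau.
\]

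Assembling the pieces, since $\prod_{i=1}^n \SIM(P_i,Q_i)>0$, each individual factor is strictly positive, so the per-coordinate bound gives $\SIM(P_i,Q_i)\geq \tau$ for every $i$, and therefore $\prod_{i=1}^n \SIM(P_i,Q_i) \geq \tau^n$. There is no real obstacle here: the only subtlety is making sure to use the positivity hypothesis to pick the witness $x^*$ for each coordinate, since $\tau_P$ and $\tau_Q$ only bound the \emph{nonzero} marginal values and one needs an $x^*$ at which both marginals are simultaneously nonzero, which is exactly what $\SIM(P_i,Q_i)>0$ provides.
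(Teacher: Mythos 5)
Your proof is correct and follows essentially the same route as the paper's: both reduce to the per-coordinate bound $\SIM(P_i,Q_i)\geq\tau$ by using positivity to find a witness $x^*$ with both marginals nonzero, then bound the sum from below by that single term. If anything, your write-up is slightly more careful, since the paper asserts $\min(P_i(x),Q_i(x))>\tau$ where only $\geq\tau$ is warranted.
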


\begin{proof}
It would suffice to show that $\SIM\cprn{P_i, Q_i} \geq \tau$, since the $P_i$'s and the $Q_i$'s are independent.
Since $\prod_{i = 1}^n \SIM\cprn{P_i, Q_i} > 0$, for all $i$ there must be some $x$ such that $P_i(x), Q_i(x) > 0$ and $\min\cprn{P_i(x), Q_i(x)} > \tau$.
Therefore,
\[
\SIM\cprn{P_i, Q_i}
= \sum_{x \in \sqbra{\ell}} \min(P_i(x), Q_i(x))
\geq \tau.
\qedhere
\]
\end{proof}

\section{Related Work}

\label{sec:relatedwork}

Estimating the Bayes error has been a topic of continued interest in the machine learning community
\cite{Fukunaga1975,Devijver1985,Noshad2019learning,theisen2021evaluating,
IYCNSI23}.
These works focus on the setting where distributions are only 
accessible through samples rather than explicitly specified, leading to 
techniques distinct from those needed in our setting of explicitly represented distributions.

The computational complexity of statistical similarity was established through Scheff\'e's identity ($\stv(P,Q) + \dtv(P,Q) = 1$) and the result of \cite{bhattacharyya2023approximating}, where it is shown that the exact computation of $\dtv$ (and thus $\stv$) is $\#\P$-hard even for product distributions.
This hardness naturally leads to the study of approximation algorithms, with multiplicative approximation being stronger than additive approximation for measures bounded in $[0,1]$.

For distributions samplable by Boolean circuits, additive approximation of statistical similarity is complete for $\SZK$ (Statistical Zero Knowledge)~\cite{SV03}, while the problem becomes tractable for distributions that are both samplable and have efficiently computable point probabilities \cite{0001GMV20}. 

While recent work has made significant progress on multiplicative approximation of statistical distance (also known as total variation distance), including an FPRAS \cite{feng2023simple} and an FPTAS \cite{feng2024deterministically} for product distributions, these results do not directly translate to statistical similarity.
This is because multiplicative approximation of statistical distance does not yield multiplicative approximation of its complement (statistical similarity), necessitating new algorithmic techniques for statistical similarity.
Similarly, the $\NP$-hardness result for multiplicatively approximating statistical distance between Bayes nets~\cite{bhattacharyya2023approximating} does not immediately imply hardness for statistical similarity.

It is perhaps worth remarking that technical barrier in translating multiplicative approximation of statistical distance to statistical similarity is rather fundamental, i.e., it is not possible in general to use an efficient multiplicative approximation algorithm for a function $f$ in order to design an efficient multiplicative approximation algorithm for $1 - f$.
In particular, even if there is an efficient multiplicative approximation algorithm $f$, approximating $1 - f$ could be $\NP$-hard.
For instance, let $f$ be a function that takes as input a Boolean DNF formula $\phi$ and outputs the probability that a random assignment satisfies $\phi$.
It is known that there is a randomized multiplicative approximation algorithm for estimating $f$~\cite{karp1989monte-carlo}.
However, a multiplicative approximation algorithm for estimating $1 - f$ implies that all $\NP$-complete problems have efficient randomized algorithms ($\RP = \NP$).
This is because the complement of a DNF formula is a CNF formula, and there is no efficient randomized multiplicative approximation for estimating the acceptance probability of CNF formulas unless $\RP = \NP$.

The connection between statistical similarity and hypothesis testing has been explored in several works.
While \cite{kontorovich2024aggregation} provides analytical bounds on statistical similarity for product distributions in the context of hypothesis testing, these bounds do not yield multiplicative approximation algorithms.

\section{Our Results}

\label{sec:results}

Our first contribution is the design of a deterministic polynomial-time approximation scheme to estimate the statistical similarity between product distributions.

\begin{theorem}
\label{thm:main}
There is an FPTAS for estimating $\stv(P, Q)$ whereby $P$ and $Q$ are product distributions succinctly represented by their component distributions.
\end{theorem}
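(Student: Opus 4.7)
The plan is to build a deterministic dynamic-programming scheme based on a coordinate-by-coordinate recursive decomposition of $\stv$, combined with a sparse logarithmic grid on the DP table so that the table size stays polynomial while multiplicative accuracy is preserved end-to-end. For $0 \le i \le n$ and $r \ge 0$, define
\[
\psi_i(r) \;:=\; \sum_{y \in \sqbra{\ell}^{n-i}} \min\cprn{r \cdot P_{>i}(y),\ Q_{>i}(y)},
\]
where $P_{>i}, Q_{>i}$ are the product distributions over coordinates $i+1,\dots,n$. Then $\stv(P,Q) = \psi_0(1)$, the recursion is $\psi_{i-1}(r) = \sum_{x:\, Q_i(x) > 0} Q_i(x)\, \psi_i\cprn{r \cdot P_i(x)/Q_i(x)}$, and the base case is $\psi_n(r) = \min(r,1)$. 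Each $\psi_i$ is piecewise linear, non-decreasing, and concave on $[0,\infty)$; is bounded above by $\min(r,1)$; and satisfies $\psi_i(0) = 0$. The last property will be the key enabler of multiplicative error control.

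Set $\eta := \varepsilon/(4n)$ and let $\tau$ be the smallest positive atomic probability of $P$ or $Q$, so $\log(1/\tau)$ is bounded by the input bit length $b$. The algorithm maintains an approximation $\hat\psi_i$ tabulated on a geometric grid $G := \{r_k = \tau^{2n}(1+\eta)^k : 0 \le k \le K\}$ spanning $[\tau^{2n}, \tau^{-2n}]$, with $K = O(n\log(1/\tau)/\eta) = \cpoly{n, b, 1/\varepsilon}$. Working bottom-up from $\hat\psi_n = \psi_n$, for every grid point $r_k$ compute
\[
\hat\psi_i(r_k) \;:=\; \sum_{x:\, Q_{i+1}(x) > 0} Q_{i+1}(x) \cdot \hat\psi_{i+1}\cprn{\pi\cprn{r_k \cdot P_{i+1}(x)/Q_{i+1}(x)}},
\]
where $\pi(\cdot)$ rounds to the nearest grid point of $G$. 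Return $\hat\psi_0(\pi(1))$ as the estimate for $\stv(P,Q)$.

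The error analysis rests on the following interpolation lemma, which I expect \Cref{clm:help} to supply: for any non-decreasing concave $\psi : [0,\infty) \to \R_{\ge 0}$ with $\psi(0) = 0$ and any $r \le r' \le (1+\eta) r$, one has $\psi(r')/(1+\eta) \le \psi(r) \le \psi(r')$. The lower bound follows because concavity with $\psi(0) = 0$ yields $\psi(r) \ge (r/r')\,\psi(r') \ge \psi(r')/(1+\eta)$. Thus each rounding step contributes a multiplicative factor $(1+\eta)$, and an induction on $i$ from $n$ down to $0$ proves the invariant $\psi_i(r_k)/(1+\eta)^{n-i} \le \hat\psi_i(r_k) \le (1+\eta)^{n-i}\psi_i(r_k)$ for every grid point $r_k$. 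Every argument evaluated during the recursion is a product of at most $n$ ratios $P_j(x)/Q_j(x) \in [\tau, 1/\tau]$, so the grid $G$ comfortably contains them. With $\eta = \varepsilon/(4n)$ one has $(1+\eta)^n \le 1 + \varepsilon$, and the total runtime is $O(n \cdot \ell \cdot K) = \cpoly{n, \ell, b, 1/\varepsilon}$.

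The main obstacle is that, by \Cref{prop:prod-sim-lower-bound}, $\stv(P,Q)$ can be as small as $\tau^n$, so a naive additive-error DP with precision $1/\cpoly{n, 1/\varepsilon}$ cannot even distinguish $\stv(P,Q)$ from $0$ in polynomial time; the algorithm must preserve a genuinely multiplicative approximation. The geometric grid combined with the concavity-plus-vanishing-at-$0$ interpolation bound is precisely what converts a multiplicative error in the argument $r$ into a multiplicative error in $\psi_i(r)$, keeping the relative error controlled across all $n$ levels. Zero atoms are a routine technicality: summands with $Q_{i+1}(x) = 0$ contribute $0$ since $\min(\cdot, 0) = 0$, while summands with $P_{i+1}(x) = 0$ reduce to $\psi_{i+1}(0) = 0$.
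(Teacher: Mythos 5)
Your proof is correct, but it takes a genuinely different route from the paper's. The paper works \emph{forward} over the measure: it forms the running ratio random variable $Y_{i+1}=\widetilde{Y}_i\cdot R_{i+1}$ and sparsifies its \emph{distribution} by replacing the values in each geometric interval with their conditional mean; because the interval $(0,\gamma]$ near zero cannot be controlled multiplicatively this way, each step incurs an additive error $\gamma B$ on top of the $(1+\delta)$ factor, and the accumulated additive error is only absorbed at the end via the lower bound $\stv(P,Q)\geq\tau^n$ (\Cref{prop:sim-lower-bound,prop:prod-sim-lower-bound}). You instead work \emph{backward} over the function: you tabulate the suffix similarities $\psi_i$ on a geometric grid of arguments and round the argument rather than the random variable, and your key observation --- that each $\psi_i$ is concave, non-decreasing, and vanishes at $0$, so a $(1+\eta)$ perturbation of the argument perturbs the value by at most $(1+\eta)$ --- yields a \emph{purely multiplicative} per-step error with no additive term and no need for the $\tau^n$ lower bound in the accuracy analysis. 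That concavity lemma is not what \Cref{clm:help} says (the paper's claim is the reverse sparsification inequality, with the additive $\gamma B$ term), but you prove your lemma yourself and the proof is sound, so nothing is missing. Two small points to tidy up: (i) as literally stated, the grid $[\tau^{2n},\tau^{-2n}]$ does not contain $r_k\cdot P_{i+1}(x)/Q_{i+1}(x)$ for grid points $r_k$ at the extremes; you should either restrict the tabulation to arguments reachable from $\pi(1)$ (which, being products of at most $n$ ratios in $[\tau,1/\tau]$ times an accumulated rounding drift of at most $(1+\eta)^n\leq 2$, do lie well inside the grid) or clamp $\pi$ at the boundary and note that the clamped entries never feed into $\hat\psi_0(\pi(1))$; (ii) the final evaluation at $\pi(1)$ costs one more $(1+\eta)$ factor, so the exponent in the invariant is $n-i+1$ at the root, which your choice $\eta=\varepsilon/(4n)$ already covers. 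On balance your analysis is arguably cleaner than the paper's (no additive error bookkeeping), while the paper's sparsification viewpoint has the advantage of matching the total-variation FPTAS of Feng--Liu--Liu, from which it is adapted.
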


\Cref{thm:main} is proved by adapting the ideas of \cite{feng2024deterministically}.
We define a random variable $R = P \| Q$ which is the ratio of $P$ and $Q$ and then partition its range into a sequence of intervals.
Every one of these intervals is subsequently ``sparsified,'' in the sense that we only take into account the average value of $R$ over this interval.
This allows us to efficiently estimate statistical similarity, as we show in the proof.

A natural question is whether \Cref{thm:main} can be extended to more general distributions such as Bayes net distributions.
Our second result is a hardness result.

\begin{theorem}
\label{thm:sim-bayes}
Given two probability distributions $P$ and $Q$ that are defined by Bayes nets of in-degree two, it is $\NP$-complete to decide whether $\SIM(P, Q)\neq 0$ or not.
Hence the problem of multiplicatively estimating $\stv$ is $\NP$-hard.
\end{theorem}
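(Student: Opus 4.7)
The plan is to reduce $3$-SAT to the support-intersection problem. The key observation is that $\stv(P,Q)\neq 0$ if and only if $\mathrm{supp}(P)\cap\mathrm{supp}(Q)\neq\emptyset$, i.e., there exists a joint assignment $x$ with $P(x),Q(x)>0$. Membership in $\NP$ is then immediate: guess such an $x$ and verify in polynomial time by multiplying out the conditional probabilities along each Bayes net. Hardness of multiplicative approximation of $\stv$ follows for free from the $\NP$-completeness of the zero-vs-nonzero question, because any $(1+\varepsilon)$-multiplicative approximation of a value in $[0,1]$ equals $0$ exactly when the true value is $0$. So all the work is in the $\NP$-hardness of the decision problem.

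Given a $3$-CNF formula $\phi=C_1\wedge\cdots\wedge C_m$ on variables $x_1,\ldots,x_n$, I would build $P$ over nodes $X_1,\ldots,X_n$, wires $W_{j,1},W_{j,2}$ for each $j\in\sqbra{m}$, and an output $Y$. The roots $X_1,\ldots,X_n$ are independent $\Bern(1/2)$. For each clause $C_j=\ell_{j,1}\vee\ell_{j,2}\vee\ell_{j,3}$, let $W_{j,1}$ have as parents the two variables underlying $\ell_{j,1},\ell_{j,2}$, with a deterministic conditional distribution equal to $\ell_{j,1}\vee\ell_{j,2}$ (so negations are absorbed into the gate's conditional table), and let $W_{j,2}$ have parents $W_{j,1}$ and the variable underlying $\ell_{j,3}$, with deterministic value $W_{j,1}\vee\ell_{j,3}$. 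Finally, compute $Y=\phi(x)$ by a balanced binary AND-tree over $W_{1,2},\ldots,W_{m,2}$; every internal node of the tree has exactly two parents. Every node of $P$ then has in-degree at most two, and $\mathrm{supp}(P)=\brkts{(x,w,y):w=\text{wires}(x),\,y=\phi(x)}$. For $Q$, I would use the same set of variables but make every node a parentless independent $\Bern(1/2)$, except that $Y$ is deterministically $1$; thus $\mathrm{supp}(Q)$ is exactly the set of configurations with $Y=1$, and $Q$ trivially has in-degree at most two.

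By construction, $\mathrm{supp}(P)\cap\mathrm{supp}(Q)$ is precisely the set of configurations of the form $(x,\text{wires}(x),1)$ with $\phi(x)=1$, which is non-empty if and only if $\phi$ is satisfiable. This, together with the $\NP$ membership argument and the approximation observation above, yields both statements of \Cref{thm:sim-bayes}. The main technical care required is to ensure that every node in both Bayes nets has in-degree at most two, which is what forces a three-literal disjunction to be split into two binary gates and the big conjunction of $m$ clauses to be computed by a balanced AND-tree rather than a single wide gate; once those gadgets are in place, the correctness of the reduction reduces to a short inspection of the supports.
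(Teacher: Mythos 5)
Your proposal is correct and follows essentially the same route as the paper: encode $\phi$ as a fan-in-two circuit inside a Bayes net so that $P$'s support traces the circuit's computation while $Q$ forces the output node to $1$, making $\SIM(P,Q)\neq 0$ equivalent to satisfiability, with the approximation hardness following because a multiplicative approximation preserves zeroness. The only (harmless) deviation is that your $Q$ is a near-product distribution rather than sharing $P$'s DAG with only the output CPT changed, which costs you the paper's clean identity $\SIM(P,Q)=\sabs{\Sol(\phi)}/2^n$ but changes nothing for the theorem as stated.
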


\Cref{thm:sim-bayes} is proved by adapting the proof of hardness of approximating TV distance between Bayes net distributions presented in~\cite{bhattacharyya2023approximating}.

\section{Estimating Statistical Similarity}

\label{sec:fptas}

We prove \Cref{thm:main}.
We first require the following fundamental definition.

\begin{definition}[$J$-Sparsification~\cite{feng2024deterministically}]
Let $S$ be a discrete random variable that takes values between $0$ and $B$ and $J$ be a collection of intervals $I_0, \dots, I_m$ that partition $\sqbra{0, B}$.
We define the {\em $J$-sparsification} of $S$, denoted by $\widetilde{S}$, as follows:
Given an interval $I_j$, let $S_j$ be the conditional expectation of $S$ conditioned on $S$ lying in the interval $I_j$.
That is, suppose that $S$ takes values $r_1, \dots, r_k$ in the interval $I_j$.
Then $S_j = {\sum_{i=1}^k \cpr{S = r_i} r_i} / {\sum_{i=1}^k \cpr{S = r_i}}$.
Now the sparsified random variable $\widetilde{S}$ takes value $S_j$ with probability $\Prob[S \in I_j] = \sum_{i=1}^k \cpr{S = r_i}$.
\end{definition}

Let $P, Q$ be distributions and $D$ be the common domain of $P, Q$ and let $R$ be a ratio random variable that takes the value $P\cprn{x} / Q\cprn{x} \geq 0$ with probability $Q\cprn{x}$.
We can assume $Q\cprn{x} > 0$, as $R$ only takes value when $x$ is such that $Q\cprn{x} > 0$.
We denote this by writing $R := P \| Q$.
For ratios $R_1 = P_1 \| Q_1, R_2 = P_2 \| Q_2$ we define their independent product $R_1 \cdot R_2$ as a random variable that takes the value $\prn{P_1\cprn{x} / Q_1\cprn{x}} \prn{P_2\cprn{y} / Q_2\cprn{y}}$ with probability $Q_1\cprn{x} Q_2\cprn{y}$.

Moreover, we overload notation and for a ratio random variable $R = P \| Q$, we let $\SIM(R)$ denote the functional $\cexptq{\min(R,1)}{}$.
Then
\begin{align*}
\SIM\cprn{R}
& = \cexptq{\min\cprn{R, 1}}{} \\
& = \cexptq{\min\cprn{P\cprn{x} / Q\cprn{x}, 1}}{x \sim Q} \\
& = \sum_x \min\cprn{P\cprn{x} / Q\cprn{x}, 1} Q\cprn{x}
= \sum_x \min\cprn{P\cprn{x}, Q\cprn{x}}
= \SIM\cprn{P, Q}.
\end{align*}

\paragraph{Setting and Algorithm Definition.}

We now adapt \cite{feng2024deterministically}, as follows.
We denote by $\varepsilon$ the desired accuracy parameter.
Let $\tau_P$ be a lower bound on $P_i\cprn{x}$ for any $i$ and $x \in \sqbra{\ell}$ whereby $P_i(x)$ is nonzero.
Similarly define $\tau_Q$ and let $\tau := \min\cprn{\tau_P, \tau_Q}$.
Let also $B := 1 / \tau^n$, $\delta := \varepsilon / \prn{2n} \leq 1$, and $\gamma := \varepsilon \tau^{2 n} / \prn{4 n^2 \prn{1 + \varepsilon}}$.

We require the following.

\begin{proposition}
It is the case that $\cpr{0 \leq R \leq B} = 1$.
\end{proposition}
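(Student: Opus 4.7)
The plan is to verify the two inequalities $R \geq 0$ and $R \leq B$ separately, using the product structure of $Q$ together with the uniform lower bound $\tau$ on nonzero component probabilities. Recall that $R = P \| Q$ is supported only on those $x$ in the common domain with $Q(x) > 0$, and takes value $P(x)/Q(x)$ with probability $Q(x)$ at each such $x$. Since the probabilities $Q(x)$ at these support points sum to $1$, it suffices to show that $0 \leq P(x)/Q(x) \leq B$ pointwise for every $x$ with $Q(x) > 0$.

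The lower bound is immediate because $P(x) \geq 0$ and $Q(x) > 0$. For the upper bound, I would first observe that $P(x) \leq 1$, so $P(x)/Q(x) \leq 1/Q(x)$; the task then reduces to lower bounding $Q(x)$ by $\tau^n$. Since $Q$ is a product distribution, $Q(x) = \prod_{i=1}^n Q_i(x_i)$, and the assumption $Q(x) > 0$ forces $Q_i(x_i) > 0$ for each coordinate $i$. By the definition of $\tau_Q$ (and hence $\tau \leq \tau_Q$), each such nonzero component satisfies $Q_i(x_i) \geq \tau_Q \geq \tau$, yielding
\[
Q(x) = \prod_{i=1}^n Q_i(x_i) \geq \tau^n.
\]
Consequently $P(x)/Q(x) \leq 1/\tau^n = B$, and the claim follows.

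There is no real obstacle here: the proposition is essentially a bookkeeping statement that sets up the sparsification step, where the range of $R$ has to be a bounded interval $[0,B]$ before it can be partitioned into intervals $I_0, \dots, I_m$. The only thing to be careful about is to restrict attention to the support $\{x : Q(x) > 0\}$ when invoking the coordinatewise lower bound $\tau$, since $\tau_Q$ bounds only the \emph{nonzero} values of $Q_i$; this is exactly what the definition of $R = P \| Q$ already arranges.
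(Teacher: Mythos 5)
Your proof is correct and follows essentially the same route as the paper: nonnegativity is immediate, and the upper bound comes from $P(x)\le 1$ together with $Q(x)=\prod_i Q_i(x_i)\ge\tau^n$ on the support of $R$. The paper states the bound $\max_x P(x)/Q(x)\le 1/\tau^n$ without elaboration; you simply fill in that one-line justification explicitly, which is a fair expansion rather than a different argument.
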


\begin{proof}
By definition, $R \geq 0$.
Since $R$ takes the value $P\cprn{x} / Q\cprn{x}$ with probability $Q\cprn{x}$, we get that $R$ is at most $\max_x\cprn{P\cprn{x} / Q\cprn{x}} \leq 1 / \tau^n = B$ with probability $1$.
\end{proof}

We now define a set of intervals
\[
J := \brkts{\{0\}, I_0 := (0, \gamma], I_1 := (\gamma, \gamma(1+\delta)], \dots, I_m := (\gamma(1 + \delta)^{m - 1}, \gamma(1 + \delta)^m = B]},
\]
whereby $m := \prn{\log\cprn{B / \gamma}} / \log\cprn{1 + \delta}$.
Define $R_1, \dots, R_n$ to be the ratios for each coordinates, that is, $R_i := P_i \| Q_i$.
We define a set of random variables $Y_i$ for $1\leq i\leq n$. Define $Y_1 = R_1$ and $Y_{i+1} = \widetilde{Y}_i \cdot R_{i + 1}$ where $\widetilde{Y}_i$ is the $J$-sparsification of $Y_i$.
Also, for convenience, set $Z_i := R_{i+1} \cdot \ldots \cdot R_n$ and $Z_n = 1$.
The output of our algorithm is $\SIM(\widetilde{Y}_n \cdot Z_n) = \SIM(\widetilde{Y}_n)$.
See \Cref{alg:main}.

\begin{algorithm}[htbt]
\caption{The pseudocode of our algorithm.}
\label{alg:main}
\begin{algorithmic}[1]
\REQUIRE Product distributions $P, Q$ through their marginal distributions $P_1, \dots, P_n, Q_1, \dots, Q_n$, each over $\sqbra{\ell}$, and an accuracy error parameter $\varepsilon$.
\ENSURE The output $\SIM\cprn{\widetilde{Y}_n}$ is an $\varepsilon$-approximation of $\SIM\cprn{P, Q}$.
\STATE
\COMMENT{We can compute $n$ by parsing the input.}
\FOR{$i \gets 1, \dots, n$}
\STATE $R_i \gets P_i \| Q_i$
\STATE
\COMMENT{Computing $\SIM\cprn{R_i}$ takes time $O\cprn{\ell}$.}
\IF{$\SIM\cprn{R_i} = 0$}
\RETURN $0$
\ENDIF
\ENDFOR
\STATE $\delta \gets \varepsilon / \prn{2 n}$
\STATE $\tau_P \gets \min\cbrkts{P_i\cprn{x} \mid i \in \sqbra{n}, x \in \sqbra{\ell}, P_i\cprn{x} > 0}$
\hfill
\COMMENT{This step takes time $O\cprn{n \ell}$.}
\STATE $\tau_Q \gets \min\cbrkts{Q_i\cprn{x} \mid i \in \sqbra{n}, x \in \sqbra{\ell}, Q_i\cprn{x} > 0}$
\hfill
\COMMENT{This step takes time $O\cprn{n \ell}$.}
\STATE $\tau \gets \min\cprn{\tau_P, \tau_Q}$
\STATE $\gamma \gets \varepsilon \tau^{2 n} / \prn{4 n^2 \prn{1 + \varepsilon}}$
\STATE $J \gets \brkts{\brkts{0}, \brkts{\prnsq{0, \gamma}}}$
\FOR{$i \gets 1, \dots, m$}
\STATE $J \gets J \cup \brkts{\prnsq{\gamma \prn{1 + \delta}^{i - 1}, \gamma \prn{1 + \delta}^i}}$
\ENDFOR
\STATE $Y_1 \gets R_1$
\FOR{$i \gets 1, \dots, n$}
\STATE $\widetilde{Y}_i \gets J\text{-sparsification of } Y_i$
\label{line:sparsification}
\hfill
\COMMENT{This step takes time $O\cprn{m \ell}$.}
\IF{$i < n$}
\STATE $Y_{i + 1} \gets \widetilde{Y}_i \cdot R_{i + 1}$
\label{line:multiplication}
\hfill
\COMMENT{This step takes time $O\cprn{m \ell}$.}
\ENDIF
\ENDFOR
\RETURN $\SIM\cprn{\widetilde{Y}_n}$
\hfill
\COMMENT{Computing $\SIM\cprn{\widetilde{Y}_n}$ takes time $O\cprn{m}$.}
\end{algorithmic}
\end{algorithm}

\paragraph{Running Time.}

Note that $\SIM\cprn{R_i} = \SIM\cprn{P_i, Q_i}$ can be computed in time $O\cprn{\ell}$ by following the equality $\SIM\cprn{R_i} = \cexptq{\min\cprn{R_i, 1}}{}$ and utilizing the fact that $R_i$ may assume $\ell$ values.
Moreover, the sparsification step can be computed in time $O\cprn{m \ell}$. 
This is because, firstly, in Line~\ref{line:multiplication} we generate $m \ell$ many ratios and then in Line~\ref{line:sparsification} (sparsification step) we crunch them into $m$ ratios.
Finally, and similarly to $\SIM\cprn{R_i}$, the output $\SIM\cprn{\widetilde{Y}_n}$ can be computed in time $O\cprn{m}$ by utilizing the fact that $\widetilde{Y}_n$ may assume $m + 1$ values (as there are $m + 1$ intervals in $J$).

Therefore, the running time of \Cref{alg:main} is
\begin{align*}
O\cprn{n \ell m}
& = O\cprn{n \ell \prn{\log\cprn{B / \gamma}} / \log\cprn{1 + \delta}} \\
& = O\cprn{n \ell \prn{\log\cprn{\prn{1 / \tau^n} / \prn{\varepsilon \tau^{2 n} / \prn{4 n^2 \prn{1 + \varepsilon}}}}} / \log\cprn{1 + \varepsilon / \prn{2 n}}} \\
& = \widetilde{O}\cprn{n^3 \ell \log\cprn{\prn{1 + \varepsilon} / \prn{\varepsilon \tau}} / \varepsilon}.
\end{align*}

\paragraph{Correctness.}

We will prove the this algorithm outputs a quantity that is an $\varepsilon$-approximation to $\SIM\cprn{P, Q}$.
This is accomplished by \Cref{clm:upper} and \Cref{clm:lower}.

\begin{lemma}
\label{clm:upper}
We have that $\SIM\cprn{\widetilde{Y}_n} \leq \prn{1 + \varepsilon} \SIM\cprn{P, Q}$.
\end{lemma}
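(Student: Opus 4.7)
The plan is to view $\widetilde{Y}_n$ as the output of a telescoping sequence of $n$ sparsification steps applied to the true ratio $R := R_1 \cdot R_2 \cdots R_n$, and to bound the multiplicative overestimate incurred at each step. Since $Y_1 = R_1$ and, for $i \geq 2$, $Y_i \cdot Z_i = \widetilde{Y}_{i - 1} \cdot R_i \cdot Z_i = \widetilde{Y}_{i - 1} \cdot Z_{i - 1}$, while $\widetilde{Y}_n \cdot Z_n = \widetilde{Y}_n$, the chain
\[
Y_1 Z_1 \;\to\; \widetilde{Y}_1 Z_1 = Y_2 Z_2 \;\to\; \widetilde{Y}_2 Z_2 \;\to\; \cdots \;\to\; \widetilde{Y}_n Z_n = \widetilde{Y}_n
\]
transforms $R$ into $\widetilde{Y}_n$ in $n$ steps, each of the form ``sparsify one factor while leaving the independent factor $Z_i$ alone.'' Since $\SIM\cprn{R} = \SIM\cprn{P, Q}$, it suffices to control the increase in $\SIM$ produced by one such step.

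The main technical tool I would prove is the following per-step sparsification lemma: for any nonnegative $S$ supported in $[0, B]$ and any independent nonnegative $T$ with $\cexpt{T} \leq 1$,
\[
\SIM\cprn{\widetilde{S} \cdot T} \;\leq\; (1 + \delta)\,\SIM\cprn{S \cdot T} \,+\, \gamma.
\]
For each interval $I_j$ with $j \geq 1$, both the conditional mean $M_j := \cexpt{S \mid S \in I_j}$ and every $s \in I_j$ lie inside $I_j$, so $M_j \leq (1 + \delta) s$; combined with the elementary inequality $\min(c x, 1) \leq c \min(x, 1)$ for $c \geq 1$, this yields the pointwise bound $\min\cprn{M_j t, 1} \leq (1 + \delta) \min\cprn{s t, 1}$ for all $t \geq 0$, and averaging over $S$ conditionally on $I_j$ and then over $T$ gives the $(1 + \delta)$-factor on the $I_j$-contribution. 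For the exceptional interval $I_0 = (0, \gamma]$ we have $M_0 \leq \gamma$, so $\min\cprn{M_0 T, 1} \leq \gamma T$ contributes at most $\gamma \cexpt{T} \leq \gamma$ additively; the singleton $\{0\}$ contributes $0$ on both sides. The hypothesis $\cexpt{Z_i} \leq 1$ holds because $\cexpt{R_j} = \sum_{x : Q_j(x) > 0} P_j(x) \leq 1$ for each $j$, and $Z_i$ is a product of independent such factors.

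Unrolling the per-step bound along the telescope yields $\SIM\cprn{\widetilde{Y}_n} \leq (1 + \delta)^n \SIM\cprn{R} + n \gamma$. Plugging in $\delta = \varepsilon/(2 n)$, a direct computation shows $(1 + \delta)^n \leq 1 + \varepsilon/2$ in the regime of interest; and \Cref{prop:sim-lower-bound} combined with \Cref{prop:prod-sim-lower-bound} gives the lower bound $\SIM\cprn{R} \geq \tau^n$, so the choice $\gamma = \varepsilon \tau^{2 n} / \prn{4 n^2 \prn{1 + \varepsilon}}$ ensures $n \gamma \leq (\varepsilon/2)\,\SIM\cprn{R}$. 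Adding the two contributions gives $\SIM\cprn{\widetilde{Y}_n} \leq (1 + \varepsilon)\,\SIM\cprn{P, Q}$, as desired. The main obstacle is the per-step sparsification lemma itself: the multiplicative $(1+\delta)$ factor follows cleanly from the geometric spacing of $J$ and concavity of $\min(\cdot, 1)$, but carefully bounding the residual mass in $(0, \gamma]$ after multiplication by the independent factor $T$ requires the hypothesis $\cexpt{T} \leq 1$, and absorbing the accumulated additive error $n \gamma$ into a clean multiplicative $(1+\varepsilon)$ bound at the end hinges on the nontrivial lower bound $\SIM\cprn{R} \geq \tau^n$ supplied by the two propositions.
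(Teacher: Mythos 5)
Your proposal is correct and follows essentially the same route as the paper's proof: the same telescoping chain $Y_i Z_i \to \widetilde{Y}_i Z_i = Y_{i+1} Z_{i+1}$, the same per-step bound consisting of a $(1+\delta)$ multiplicative loss on the intervals $I_j$ with $j \geq 1$ plus a small additive loss from $I_0$, and the same final absorption of the accumulated additive error using $\SIM(P,Q) \geq \tau^n$ via \Cref{prop:sim-lower-bound} and \Cref{prop:prod-sim-lower-bound}. The only (harmless) difference is that you bound the $I_0$ contribution by $\gamma\,\cexpt{Z_i} \leq \gamma$ using independence and $\cexpt{R_j} \leq 1$, whereas the paper uses the cruder pointwise bound $Z_i \leq B$ to get $\gamma B$; both versions are comfortably absorbed by the chosen value of $\gamma$.
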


\begin{proof}
We will first show that $\SIM\cprn{\widetilde{Y}_i \cdot Z_i} \leq \prn{1 + \delta} \SIM\cprn{{Y}_i \cdot Z_i} + \gamma B$.
To this end, we have
\begin{align*}
\SIM\cprn{\widetilde{Y}_i \cdot Z_i}
& = \cexptq{\min\cprn{\widetilde{Y}_i \cdot Z_i, 1}}{} \\
& = \cexptq{\sum_{j = 0}^m \min\cprn{\widetilde{Y}_i \cdot Z_i, 1} \mathbbm{1}\csqbra{Y_i \in I_j}}{} \\
& = \cexptq{\sum_{j = 1}^m \min\cprn{\widetilde{Y}_i \cdot Z_i, 1} \mathbbm{1}\csqbra{Y_i \in I_j}}{} + \cexptq{\min\cprn{\widetilde{Y}_i \cdot Z_i, 1} \mathbbm{1}\csqbra{Y_i \in I_0}}{} \\
& \leq \sum_{j = 1}^m \cexptq{\min\cprn{\widetilde{Y}_i \cdot Z_i, 1} \mathbbm{1}\csqbra{Y_i \in I_j}}{}  + \gamma B\\
& \leq \sum_{j = 1}^m \cexptq{\prn{\prn{1 + \delta} \min\cprn{Y_i \cdot Z_i, 1}} \mathbbm{1}\csqbra{Y_i \in I_j}}{} +\gamma B\\
& = \sum_{j = 1}^m \prn{1 + \delta} \cexptq{\min\cprn{Y_i \cdot Z_i, 1} \mathbbm{1}\csqbra{{Y}_i \in I_j}}{} + \gamma B \\
& = \prn{1 + \delta} \sum_{j = 1}^m \cexptq{\min\cprn{Y_i \cdot Z_i, 1} \mathbbm{1}\csqbra{{Y}_i \in I_j}}{} + \gamma B\\
& \leq \prn{1 + \delta} \sum_{j = 0}^m \cexptq{\min\cprn{Y_i \cdot Z_i, 1} \mathbbm{1}\csqbra{{Y}_i \in I_j}}{} + \gamma B \\
& = \prn{1 + \delta} \cexptq{\min\cprn{Y_i \cdot Z_i, 1}}{} + \gamma B
= \prn{1 + \delta} \SIM\cprn{Y_i \cdot Z_i} + \gamma B,
\end{align*}
The first part of the first inequality follows from the linearity of the expectation.
For the second part, note that since $I_0= (0, \gamma]$, the maximum value $\widetilde{Y_i}$ can take in $I_0$ is at most $\gamma$.
The maximum value of $Z_i$ is $B$, thus $\cexptq{\min\cprn{\widetilde{Y}_i \cdot Z_i, 1} \mathbbm{1}\csqbra{Y_i \in I_0}}{} \leq \gamma B$.
For the second inequality, suppose that $Y_i \in I_j = (\gamma(1+\delta)^{j-2}, \gamma(1+\delta)^{j-1}]$.
The maximum value $\widetilde{Y_i}$ can take is $\gamma(1+\delta)^{j-1}$ and the $Y_i$ is larger than $\gamma(1+\delta)^{j-2}$.
Thus $\widetilde{Y}_i \leq \prn{1 + \delta} Y_i$.

Therefore,
\[
\SIM(\widetilde{Y}_n)
= \SIM(\widetilde{Y}_n \cdot Z_n)
\leq \gamma B + (1 + \delta) \SIM(Y_n \cdot Z_n)
= \gamma B + (1 + \delta) \SIM(\widetilde{Y}_{n - 1} \cdot Z_{n - 1})
\]
so that inductively, we will get
\begin{align*}
\SIM(\widetilde{Y}_n)
& \leq \gamma B \sum_{k = 0}^{n - 2} (1 + \delta)^k + (1 + \delta)^n \SIM(Y_1 \cdot Z_1) \\
& \leq \gamma B \prn{1 + \delta}^n + (1 + \delta)^n \SIM(P, Q) \\
& \leq \gamma B \prn{1 + \delta n} + (1 + \delta)^n \SIM(P, Q)
\leq 2 n \gamma B + (1 + \delta)^n \SIM(P, Q),
\end{align*}
since $\SIM(Y_1\cdot Z_1) = \SIM(R_1 \cdot \ldots \cdot R_n) = \SIM\cprn{P, Q}$.
What is left is to show that $2 n \gamma B + (1 + \delta)^n \SIM(P, Q) \leq \prn{1 + \varepsilon} \SIM\cprn{P, Q}$.
However, this readily follows from the definitions of $\gamma, \delta, B$ as well as \Cref{prop:sim-lower-bound} and \Cref{prop:prod-sim-lower-bound}.

Let us elaborate on these calculations.
By the fact that $\delta = \varepsilon / \prn{2 n}$, we get $(1 + \delta)^n \SIM(P, Q) \leq \prn{1 + \varepsilon / 2} \SIM\cprn{P, Q}$.
So what is left is to show that $2 n \gamma B \leq \prn{\varepsilon / 2} \SIM\cprn{P, Q}$.
By \Cref{prop:sim-lower-bound} and \Cref{prop:prod-sim-lower-bound}, it would suffice to show that $2 n \gamma B \leq \prn{\varepsilon / 2} \tau^n$, which follows directly from the definitions of $\gamma$ and $B$.
\end{proof}

Similarly, we have the following.

\begin{lemma}
\label{clm:lower}
We have that $\SIM\cprn{\widetilde{Y}_n} \geq \SIM\cprn{P, Q} / \prn{1 + \varepsilon}$.
\end{lemma}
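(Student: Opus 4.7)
The plan is to mirror the structure of the upper bound proof in \Cref{clm:upper}, replacing the pointwise upper bound $\widetilde{Y}_i \leq (1+\delta) Y_i$ by the dual pointwise lower bound $\widetilde{Y}_i \geq Y_i/(1+\delta)$ on the event $\{Y_i \in I_j\}$ for $j \geq 1$. This holds because the conditional expectation $\widetilde{Y}_i$ and $Y_i$ both lie in the same interval $I_j = (\gamma(1+\delta)^{j-1}, \gamma(1+\delta)^j]$, which has multiplicative width $1+\delta$.

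The first step will be to establish the one-step inequality $\SIM(\widetilde{Y}_i \cdot Z_i) \geq (\SIM(Y_i \cdot Z_i) - \gamma B)/(1+\delta)$. Decomposing over the partition $J$ and discarding the $\{0\}$ and $I_0$ contributions, I would write
\[
\SIM(\widetilde{Y}_i \cdot Z_i) \geq \sum_{j=1}^m \cexptq{\min(\widetilde{Y}_i \cdot Z_i, 1)\, \mathbbm{1}\csqbra{Y_i \in I_j}}{} \geq \frac{1}{1+\delta}\sum_{j=1}^m \cexptq{\min(Y_i \cdot Z_i, 1)\, \mathbbm{1}\csqbra{Y_i \in I_j}}{}.
\]
The second inequality combines the pointwise bound $\widetilde{Y}_i \geq Y_i/(1+\delta)$ with the elementary fact that $\min(x/(1+\delta), 1) \geq \min(x,1)/(1+\delta)$ for $x \geq 0$, which follows from a short case analysis on whether $x \leq 1$ or $x > 1$. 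The missing $I_0$ contribution is at most $\gamma B$, since on $\{Y_i \in I_0\}$ we have $Y_i Z_i \leq \gamma \cdot B$.

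Next, I would iterate this one-step bound over $i = 1, \ldots, n$, using the identities $\widetilde{Y}_i \cdot Z_i = Y_{i+1} \cdot Z_{i+1}$ (from $Z_i = R_{i+1} Z_{i+1}$ and independence) and $\SIM(\widetilde{Y}_n) = \SIM(\widetilde{Y}_n \cdot Z_n)$, with base case $\SIM(Y_1 \cdot Z_1) = \SIM(P,Q)$. This yields
\[
\SIM(\widetilde{Y}_n) \geq \frac{\SIM(P,Q)}{(1+\delta)^n} - \gamma B \sum_{k=1}^{n}(1+\delta)^{-k} \geq \frac{\SIM(P,Q)}{(1+\delta)^n} - n\gamma B.
\]
Finally, I would close the proof via the parameter choices $\delta = \varepsilon/(2n)$ and $\gamma = \varepsilon \tau^{2n}/(4n^2(1+\varepsilon))$, combined with \Cref{prop:sim-lower-bound} and \Cref{prop:prod-sim-lower-bound} (which give $\SIM(P,Q) \geq \tau^n$, the zero case having been ruled out by the algorithm's early-return check). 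Bounding $(1+\delta)^n \leq 1 + n\delta + n^2\delta^2 = 1 + \varepsilon/2 + \varepsilon^2/4$ for $\varepsilon \leq 1$ gives $(1+\varepsilon) - (1+\delta)^n \geq \varepsilon/4$, and substituting the value of $\gamma B = \varepsilon\tau^n/(4n^2(1+\varepsilon))$ shows that the additive error $n\gamma B$ is absorbed by the multiplicative slack, delivering $\SIM(\widetilde{Y}_n) \geq \SIM(P,Q)/(1+\varepsilon)$.

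The main obstacle I expect is the pointwise transfer inequality $\min(x/(1+\delta), 1) \geq \min(x,1)/(1+\delta)$: it is the correct dual of the inequality $\min((1+\delta)x, 1) \leq (1+\delta)\min(x,1)$ used in \Cref{clm:upper}, but the direction of the division requires its own case analysis rather than a direct symmetry. Once this elementary lemma is in hand, the rest of the argument is essentially a mirror of the upper bound, with the additive $+\gamma B$ error becoming $-\gamma B$ and the telescoping geometric sum still absorbed by the same parameter choices.
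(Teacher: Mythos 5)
Your proposal is correct and follows essentially the same route as the paper: your one-step inequality $\SIM(\widetilde{Y}_i \cdot Z_i) \geq \prn{\SIM(Y_i \cdot Z_i) - \gamma B}/\prn{1+\delta}$ is exactly the paper's \Cref{clm:help} rearranged, proved by the same interval decomposition, the same discarding of the $I_0$ contribution at cost $\gamma B$, and the same pointwise bound within each $I_j$, followed by the same iteration and parameter calculation. The only difference is organizational (you prove the one-step bound inline rather than citing a separate claim) and that you make the final numerical absorption slightly more explicit than the paper does.
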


To prove \Cref{clm:lower} we will utilize the following claim (proved in \Cref{sec:help}).

\begin{claim}
\label{clm:help}
It is the case that $\SIM\cprn{Y_i \cdot Z_i} \leq \prn{1 + \delta} \SIM\cprn{\widetilde{Y}_i \cdot Z_i} + \gamma B$.
\end{claim}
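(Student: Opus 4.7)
\medskip

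\noindent\textbf{Proof plan.} The claim is essentially the ``reverse'' of the chain of inequalities used in the proof of Claim \ref{clm:upper}, so my plan is to mirror that argument, swapping the roles of $Y_i$ and $\widetilde{Y}_i$. First, I will split the expectation defining $\SIM(Y_i\cdot Z_i) = \cexptq{\min(Y_i\cdot Z_i,1)}{}$ according to which interval of the partition $J$ contains $Y_i$:
\[
\SIM(Y_i\cdot Z_i) \;=\; \cexptq{\min(Y_i\cdot Z_i,1)\mathbbm{1}\csqbra{Y_i\in I_0}}{} \;+\; \sum_{j=1}^{m}\cexptq{\min(Y_i\cdot Z_i,1)\mathbbm{1}\csqbra{Y_i\in I_j}}{}.
\]
The $I_0$ term will be handled exactly as in the proof of Claim \ref{clm:upper}: on the event $\brkts{Y_i\in I_0}$ we have $Y_i\leq\gamma$ and $Z_i\leq B$, so the contribution is at most $\gamma B$.

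For the remaining terms, the key geometric observation is that each $I_j$ with $j\geq 1$ is an interval whose right endpoint is a $(1+\delta)$-factor larger than its left endpoint. Hence, conditional on $\brkts{Y_i\in I_j}$, the sparsified value $\widetilde{Y}_i = S_j$ (the conditional expectation of $Y_i$ on $I_j$) lies in $I_j$ itself, which gives the pointwise bound $Y_i \leq (1+\delta)\widetilde{Y}_i$ on that event. This is the mirror image of the inequality $\widetilde{Y}_i\leq (1+\delta)Y_i$ used in Claim \ref{clm:upper}, and is the only place where the structure of the geometric partition is used.

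Combining these with the elementary inequality $\min((1+\delta) a, 1) \leq (1+\delta)\min(a,1)$ for $a\geq 0$ (easily verified by case analysis on whether $a\leq 1/(1+\delta)$, $a\in(1/(1+\delta),1]$, or $a>1$), I can bound each term for $j\geq 1$ by
\[
\cexptq{\min(Y_i\cdot Z_i,1)\mathbbm{1}\csqbra{Y_i\in I_j}}{} \;\leq\; (1+\delta)\cexptq{\min(\widetilde{Y}_i\cdot Z_i,1)\mathbbm{1}\csqbra{Y_i\in I_j}}{}.
\]
Summing over $j\geq 1$ and extending the sum back to include $j=0$ (which only increases the right-hand side) yields $(1+\delta)\cexptq{\min(\widetilde{Y}_i\cdot Z_i,1)}{} = (1+\delta)\SIM(\widetilde{Y}_i\cdot Z_i)$, and adding the $\gamma B$ term from the $I_0$ contribution completes the bound.

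The argument is essentially routine once the pointwise estimate $Y_i \leq (1+\delta)\widetilde{Y}_i$ on $\brkts{Y_i\in I_j}$ is in hand; the one minor subtlety I expect to be careful about is the direction of the $\min(\cdot,1)$ truncation when pulling out the $(1+\delta)$ factor, since a naive bound $\min((1+\delta)a,1)\leq (1+\delta)\min(a,1)$ could fail if $a$ were very close to $1$, but the case analysis above shows it in fact holds universally. No other step should pose any real difficulty.
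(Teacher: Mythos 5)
Your proposal is correct and follows essentially the same route as the paper's proof in \Cref{sec:help}: the same decomposition over the intervals of $J$, the same $\gamma B$ bound for the $I_0$ term, and the same pointwise estimate $Y_i \leq (1+\delta)\widetilde{Y}_i$ on $\brkts{Y_i \in I_j}$ for $j \geq 1$. Your explicit verification of $\min((1+\delta)a,1) \leq (1+\delta)\min(a,1)$ is a step the paper leaves implicit, but it is the same argument.
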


\begin{proof}[Proof of \Cref{clm:lower}]
By \Cref{clm:help}, and since $\SIM\cprn{\widetilde{Y}_n} = \SIM\cprn{\widetilde{Y}_n \cdot Z_n}$, we have
\begin{align*}
\SIM\cprn{\widetilde{Y}_n}
\geq \SIM\cprn{Y_n \cdot Z_n} / \prn{1 + \delta} - \gamma B / \prn{1 + \delta}.
\end{align*}
Inductively, we get
\begin{align*}
\SIM\cprn{\widetilde{Y}_n}
& \geq \SIM\cprn{Y_1 \cdot Z_1} / \prn{1 + \delta}^n - \gamma B \sum_{k = 1}^{n - 1} 1 / \prn{1 + \delta}^k
\geq \SIM\cprn{P, Q} / \prn{1 + \delta}^n - n \gamma B.
\end{align*}
What is left is to show that $\SIM\cprn{P, Q} / \prn{1 + \delta}^n - n \gamma B \geq \SIM\cprn{P, Q} / \prn{1 + \varepsilon}$ which is equivalent to $\SIM\cprn{P, Q} \prn{1 + \varepsilon} \geq n \gamma B \prn{1 + \delta}^n \prn{1 + \varepsilon} + \SIM\cprn{P, Q} \prn{1 + \delta}^n$.
However, similarly to what we did in \Cref{clm:upper}, the stronger inequality $\SIM\cprn{P, Q} \prn{1 + \varepsilon} \geq 2 n^2 \gamma B \prn{1 + \varepsilon} + \SIM\cprn{P, Q} \prn{1 + \delta}^n$ readily follows from the definitions of $\gamma, \delta, B$ as well as \Cref{prop:sim-lower-bound} and \Cref{prop:prod-sim-lower-bound}.
\end{proof}

This concludes the proof of \Cref{thm:main}.

\begin{remark}
We note that the above estimation algorithm also works for estimating similarity  for sub-product distributions.
Let $P'$ and $Q'$ be product distributions and $P = \alpha P'$ and $Q = \beta Q'$ for constants $\alpha, \beta$.
In this case, to estimate similarity between $P$ and $Q$, we will estimate $\cexptq{\min\cprn{\alpha P'\cprn{x} / Q'\cprn{x}, \beta}}{x \sim Q'}$.
\end{remark}

\section{\texorpdfstring{$\NP$}{NP}-hardness of Estimating Statistical Similarity}

\label{sec:simbayes}

We show that it is $\NP$-hard to efficiently multiplicatively estimate $\stv\cprn{P, Q}$ for arbitrary Bayes net distributions $P, Q$.
That is, we prove \Cref{thm:sim-bayes}.
We first formally define Bayes nets.

\subsection{Bayes Nets}

For a directed acyclic graph (DAG) $G$ and a node $v$ in $G$, let $\Pi(v)$ denote the set of parents of $v$.

\begin{definition}[Bayes nets]
A \emph{Bayes net} is specified by a DAG over a vertex set $[n]$ and a collection of probability distributions over symbols in $\sqbra{\ell}$, as follows.
Each vertex $i$ is associated with a random variable $X_i$ whose range is $[\ell]$.
Each node $i$ of $G$ has a Conditional Probability Table (CPT) that describes the following:
For every $x \in [\ell]$ and every $y \in [\ell]^k$, where $k$ is the size of $\Pi(i)$, the CPT has the value of $\Prob[X_i =x|X_{\Pi\prn{i}} = y]$ stored.
Given such a Bayes net, its associated probability distribution $P$ is given by the following:
For all $x\in [\ell]^n$, $P\cprn{x}$ is equal to
\[
\cprq{X = x}{P}
=\prod_{i=1}^n\cprq{X_i=x_i|X_{\Pi\prn{i}}=x_{\Pi\prn{i}}}{P}.
\]
Here, $X$ is the joint distribution $(X_1,\ldots,X_n)$ and $x_{\Pi\prn{i}}$ is the projection of $x$ to the indices in $\Pi(i)$.
\end{definition}

Note that $P(x)$ can be computed in linear time by using the CPTs of $P$ to retrieve each $\cprq{X_i=x_i|X_{\Pi\prn{i}}=x_{\Pi\prn{i}}}{P}$.

\subsection{Proof of \texorpdfstring{\Cref{thm:sim-bayes}}{}}

The proof is similar to the proof of hardness of approximating TV distance between Bayes net distributions presented in~\cite{bhattacharyya2023approximating}.
However, the present proof gives more tight relationship between the number of satisfying assignments of a CNF formula and statistical similarity of Bayes net distributions. 

The reduction takes a CNF formula $\phi$ on $n$ variables and produces two Bayes net distributions $P$ and $Q$ (with in-degree at most $2$) so that
\[
\stv(P,Q) = \frac{|{\rm Sol}(\phi)|}{2^n},
\]
where ${\rm Sol}(\phi)$ is the set of satisfying assignments for $\phi$.

Let $\phi$ be a CNF formula. Without loss of generality, view $\phi$ as a Boolean circuit (with AND, OR, NOT gates) of fan-in at most two with $n$ input variables ${\mathcal X} = \{X_1,\ldots, X_n\}$ and $m$ internal gates ${\mathcal Y}=\brkts{Y_1,\ldots,Y_m}$.
Let $G_{\phi}$ be the DAG representing this circuit with vertex set ${\mathcal X}\cup {\mathcal Y}$.
So in total there are $n+m$ nodes in $G_{\phi}$.
Assume ${\mathcal X}\cup {\mathcal Y}$ is topologically sorted in the order $X_1,\ldots, X_n, Y_1,\ldots, Y_m$, whereby $Y_m$ is the output gate.
For every internal gate node $Y_i$, there is directed edge from node $Y_i$ to node $Y_j$ if the gate/variable corresponding to $Y_i$ is an input to $Y_j$.

We will define two Bayes net distributions $P$ and $Q$ on the same DAG $G_{\phi}$.
Let $X_i$ be a binary random variable corresponding to the input variable node $X_i$ for $1\leq i\leq n$ and $Y_j$ be a binary random variable corresponding to the internal gate $Y_j$ for $1\leq j \leq m$.
The distributions $P$ and $Q$ on $G$ are given by Conditional Probability Tables (CPTs) defined as follows.
The CPTs of $P$ and $Q$ will only differ in $Y_m$.

For both $P$ and $Q$, each $X_i$ ($1\leq i\leq n$) is a uniform random bit. For each $Y_i$ ($1\leq i \leq m-1$), its CPT is the deterministic function defined by its associated gate.
For example, if $Y_i$ is an OR gate in $G_{\phi}$, then $Y_i = 1$ with probability $1$ except when the inputs are $00$, in which case $Y_i = 0$ with probability $1$.
The CPTs for AND and NOT nodes are similar.

For $P$, the value of $Y_m$ is given by the deterministic function of the output gate $Y_m$ in $G_\phi$.
For $Q$, the value of $Y_m$ is $1$ (independently of the input).

Note that even though the sample space is $\{0, 1\}^{n + m}$, there are only $2^n$ strings in the support of $P$ and $Q$.
In particular, a point $z$ in the sample space $\{0, 1\}^{n + m}$ can be written as $xy$ where $x$ is the first $n$ bits and $y$ is the last $m$ bits.
By construction, it is clear that for every $x$, there is only one $y$ (which are the gate values for the input assignment $x$) for which $xy$ has positive probability in both the distributions, and this probability is exactly $\frac{1}{2^n}$.
For any $x$, let $f_{P}(x)$ (respectively, $f_Q(x)$) denote this unique $y$ in $P$ (respectively, in $Q$).
The crucial observation is that $f_P(x)$ and $f_Q(x)$ are the same if and only if $x$ is in ${\rm Sol}(\phi)$.
In this case, denote $f_P(x) = f_Q(x) = f(x)$.

Consider $z = x y \in \{0,1\}^{n + m}$.
If $x y$ is not in the support of both $P$ and $Q$, then the minimum of $P(z)$ and $Q(z)$ is $0$, so assume that $x y$ is in the support of at least one. 

\paragraph{Case 1.}

Assume that $x$ is a not a satisfying assignment of $\phi$.
Then $z = x f_{P}(x)$ is in the support of $P$, however, it is not in the support of $Q$ as the last bit of $z = 0$.
Similarly $x f_{Q}(x)$ is in the support of $Q$ but not in $P$.
Hence, $\min(P(z), Q(z)) = 0$.

\paragraph{Case 2.}

Assume that $x$ is a satisfying assignment of $\phi$.
In this case the last bit of $z = x f_{P}(x)$ is $1$ and hence is in the support on both $P$ and $Q$ and has a probability of $\frac{1}{2^n}$.
Thus $\min(P(z), Q(z))= \frac{1}{2^n}$.
Hence we have
\begin{align*}
\SIM(P,Q)
& = \sum_{z\in \{0,1\}^{n+m}} \min\left (P(z),Q(z)\right) \\
& = \sum_{x \in {\rm Sol}(\phi)} \min(P(xf(x),Q(xf(x))
+ \sum_{x \not\in {\rm Sol}(\phi)} \min(P(xf_P(x),Q(xf_P(x))) \\
& \qquad + \sum_{x \not\in {\rm Sol}(\phi)} \min(P(xf_Q(x)),Q(xf_Q(x)))
= \frac{|{\rm Sol}(\phi)|}{2^n} + 0 + 0
= \frac{|{\rm Sol}(\phi)|}{2^n}.
\end{align*}
This concludes the proof.

\section{Conclusion}

\label{sec:conclusion}

Statistical similarity ($\stv$) between distributions is a fundamental quantity.
In this work, we initiated a computational study of $\stv$.
Prior results on statistical distance computation imply that the exact computation of $\stv$ for high-dimensional distributions is computationally intractable.

Our first contribution is a fully polynomial-time deterministic approximation scheme (FPTAS) for estimating statistical similarity between two product distributions.
Notably, the existing FPTAS for statistical distance~\cite{feng2024deterministically} does not directly yield an FPTAS for $\stv$.
We also establish a complementary hardness result:
Approximating $\stv$ for Bayes net distributions is $\NP$-hard.
Extending our results beyond product distributions to more structured settings, such as tree distributions, remains a significant and promising research direction.

We believe $\stv$ computation is a compelling problem from a complexity theory perspective.
Interestingly, for product distributions, both $\stv$ and its complement ($1-\stv = \dtv$) admit FPTAS, making it one of the rare problems with this property.
A deeper complexity-theoretic study of functions $f$ in $\#\P$ with range in $\sqbra{0, 1}$, where both $f$ and $1-f$ have approximation schemes, is an intriguing direction for future research.
Finally, this work is limited to the algorithmic foundational aspects of $\stv$ computation.
We leave the experimental evaluation of the algorithm for future work.

\section*{Acknowledgements}

We would like to thank Aryeh Kontorovich for bringing to our attention the significance of investigating the computational aspects of statistical similarity, and for sharing with us an early draft of \cite{kontorovich2024aggregation}.
Vinod would like to thank Mohit Sharma for insightful discussions on Bayes error and for pointers to the relevant literature.
We would also like to thank Weiming Feng and Minji Yang for spotting an error in an earlier version of this work.

DesCartes:
This research is supported by the National Research Foundation, Prime Minister’s Office, Singapore under its Campus for Research Excellence and Technological Enterprise (CREATE) programme.
This work was supported in part by National Research Foundation Singapore under its NRF Fellowship Programme [NRF-NRFFAI1-2019-0004] and an Amazon Research Award.
We acknowledge the support of the Natural Sciences and Engineering Research Council of Canada (NSERC), [funding reference number RGPIN-2024-05956].
The work of AB was supported in part by a University of Warwick startup grant.
The work of SG was supported in part by the SERB award CRG/2022/007985.
Pavan's work is partly supported by NSF awards 2130536, 2342245, 2413849.
Vinod's work is partly supported by NSF awards 2130608, 2342244, 2413848, and a UNL Grand Challenges Grant.

\bibliographystyle{alpha}

\begin{thebibliography}{TWV{\etalchar{+}}21}

\bibitem[BGM{\etalchar{+}}23]{bhattacharyya2023approximating}
Arnab Bhattacharyya, Sutanu Gayen, Kuldeep~S. Meel, Dimitrios Myrisiotis, A.~Pavan, and N.~V. Vinodchandran.
\newblock On approximating total variation distance.
\newblock In {\em Proceedings of {IJCAI}}, 2023.

\bibitem[BGMV20]{0001GMV20}
Arnab Bhattacharyya, Sutanu Gayen, Kuldeep~S. Meel, and N.~V. Vinodchandran.
\newblock Efficient distance approximation for structured high-dimensional distributions via learning.
\newblock In {\em Proc. of NeurIPS}, 2020.

\bibitem[BK15]{berend2015finite}
Daniel Berend and Aryeh Kontorovich.
\newblock A finite sample analysis of the naive {B}ayes classifier.
\newblock {\em J. Mach. Learn. Res.}, 16:1519--1545, 2015.

\bibitem[Dev85]{Devijver1985}
Pierre~A. Devijver.
\newblock A multiclass, $k$-{NN} approach to {B}ayes risk estimation.
\newblock {\em Pattern Recognition Letters}, 3(1):1--6, 1985.

\bibitem[Doe38]{Doe38}
Wolfang Doeblin.
\newblock Expos{\'e} de la th{\'e}orie des cha{\i}nes simples constantes de markova un nombre fini d’{\'e}tats.
\newblock {\em Math{\'e}matique de l’Union Interbalkanique}, 2(77-105):78--80, 1938.

\bibitem[FGJW23]{feng2023simple}
Weiming Feng, Heng Guo, Mark Jerrum, and Jiaheng Wang.
\newblock A simple polynomial-time approximation algorithm for the total variation distance between two product distributions.
\newblock {\em TheoretiCS}, 2, 2023.

\bibitem[FH75]{Fukunaga1975}
Keinosuke Fukunaga and Larry~D. Hostetler.
\newblock $k$-nearest-neighbor {B}ayes-risk estimation.
\newblock {\em IEEE Transactions on Information Theory}, 21(3):285--293, 1975.

\bibitem[FLL24]{feng2024deterministically}
Weiming Feng, Liqiang Liu, and Tianren Liu.
\newblock On deterministically approximating total variation distance.
\newblock In {\em Proceedings of {SODA}}, 2024.

\bibitem[IYC{\etalchar{+}}23]{IYCNSI23}
Takashi Ishida, Ikko Yamane, Nontawat Charoenphakdee, Gang Niu, and Masashi Sugiyama.
\newblock Is the performance of my deep network too good to be true? {A} direct approach to estimating the {B}ayes error in binary classification.
\newblock In {\em The Eleventh International Conference on Learning Representations, {ICLR} 2023, Kigali, Rwanda, May 1-5, 2023}. OpenReview.net, 2023.

\bibitem[KA24]{kontorovich2024aggregation}
Aryeh Kontorovich and Ariel Avital.
\newblock Sharp bounds on aggregate expert error, 2024.

\bibitem[KLM89]{karp1989monte-carlo}
Richard~M. Karp, Michael Luby, and Neal Madras.
\newblock Monte-{C}arlo approximation algorithms for enumeration problems.
\newblock {\em J. Algorithms}, 10(3):429--448, 1989.

\bibitem[Kon12]{kontorovich2012obtaining}
Leonid Kontorovich.
\newblock Obtaining measure concentration from {M}arkov contraction, 2012.

\bibitem[Lin02]{Lin02}
Torgny Lindvall.
\newblock {\em Lectures on the coupling method}.
\newblock Courier Corporation, 2002.

\bibitem[LPW06]{LevinPeresWilmer2006}
David~A. Levin, Yuval Peres, and Elizabeth~L. Wilmer.
\newblock {\em {Markov chains and mixing times}}.
\newblock American Mathematical Society, 2006.

\bibitem[LR08]{lehman2008testing}
Erich~Leo Lehmann and Joseph~P. Romano.
\newblock {\em Testing Statistical Hypotheses, Third Edition}.
\newblock Springer texts in statistics. Springer, 2008.

\bibitem[MT12]{MT12}
Sean~P Meyn and Richard~L Tweedie.
\newblock {\em Markov chains and stochastic stability}.
\newblock Springer Science \& Business Media, 2012.

\bibitem[Nie14]{nielsen2014generalized}
Frank Nielsen.
\newblock Generalized {B}hattacharyya and {C}hernoff upper bounds on {B}ayes error using quasi-arithmetic means.
\newblock {\em Pattern Recognit. Lett.}, 42:25--34, 2014.

\bibitem[NXH19]{Noshad2019learning}
Morteza Noshad, Li~Xu, and Alfred Hero.
\newblock Learning to benchmark: Determining best achievable misclassification error from training data.
\newblock {\em arXiv preprint arXiv:1909.07192}, 2019.

\bibitem[PSNK14]{parisi2014ranking}
Fabio Parisi, Francesco Strino, Boaz Nadler, and Yuval Kluger.
\newblock Ranking and combining multiple predictors without labeled data.
\newblock {\em Proceedings of the National Academy of Sciences}, 111(4):1253--1258, 2014.

\bibitem[SV03]{SV03}
Amit Sahai and Salil~P. Vadhan.
\newblock A complete problem for statistical zero knowledge.
\newblock {\em J. {ACM}}, 50(2):196--249, 2003.

\bibitem[Tsy09]{tsybakov2009introduction}
Alexandre~B. Tsybakov.
\newblock {\em Introduction to Nonparametric Estimation}.
\newblock Springer series in statistics. Springer, 2009.

\bibitem[TWV{\etalchar{+}}21]{theisen2021evaluating}
Ryan Theisen, Huan Wang, Lav~R. Varshney, Caiming Xiong, and Richard Socher.
\newblock Evaluating state-of-the-art classification models against bayes optimality.
\newblock In {\em Advances in Neural Information Processing Systems}, volume~34, pages 29808--29820, 2021.

\end{thebibliography}

\newcommand{\etalchar}[1]{$^{#1}$}

\appendix

\section{Bayes Error and Statistical Similarity}

\label{sec:Bayes-error}

A \emph{binary prediction problem} is a distribution ${P}$ of $X\times \{0,1\}$ where ${X}$ is a (finite) feature space.
A \emph{classifier} is a deterministic function $g : X \to \{0,1\}$.
The \emph{$0$-$1$ error of the predictor $g$} is $\Prob_{(x, y)\sim P}\csqbra{g(x) \neq y}$.

The \emph{Bayes optimal classifier} is the classifier that outputs $1$ if and only if $P(x|1) \geq P(x|0)$.
The error of Bayes optimal classifier denoted as $R^*$ is called the {\em Bayes error}.
Bayes error is the minimum error possible in the sense that the error of any classifier is at least Bayes error.
It is known that for a prediction problem $P$, its Bayes error $R^*$ is given by the following marginal expectation:
\[
R^* = \cexptq{\min(P(0|X), P(1|X))}{X}.
\]
Let the prior probabilities $P(0)$ and $P(1)$ be denoted by $\alpha_0$ and $\alpha_1$ respectively.
Note that $\alpha_0$ and $\alpha_1$ are constants that sum up to $1$.
We call $P$ \emph{balanced} if $P(0) = P(1) = 1/2$.
For simplifying notation, we will also denote the likelihood distributions $P(X|0)$ and $P(X|1)$ as $P_0$ and $P_1$ respectively.

\begin{theorem}
\label{thm:Bayes-error}
For a prediction problem $P$, its Bayes error is given by
\[
R^* = \SIM(\alpha_0 P_0, \alpha_1 P_1).
\]
In particular, for a balanced prediction problem $P$, its Bayes error is given by $R^* = {\SIM(P_0, P_1)} / {2}$.
\end{theorem}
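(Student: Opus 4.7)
The plan is to start from the characterization of Bayes error as $R^* = \cexptq{\min(P(0|X), P(1|X))}{X}$, stated just before the theorem, and rewrite this expectation by ``pushing'' the marginal $P(X)$ inside the minimum. Concretely, I would first expand the expectation over the marginal $P(X) = \alpha_0 P_0(X) + \alpha_1 P_1(X)$, so that
\[
R^* = \sum_x P(x)\min\cprn{P(0|x), P(1|x)}.
\]

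Next, since $P(x) \geq 0$, I would pull $P(x)$ inside the $\min$:
\[
R^* = \sum_x \min\cprn{P(x) P(0|x), P(x) P(1|x)} = \sum_x \min\cprn{P(x,0), P(x,1)},
\]
using the definition of conditional probability. The final step is to rewrite the joint probabilities using the other chain-rule decomposition, namely $P(x, i) = P(i) P(x|i) = \alpha_i P_i(x)$, which yields
\[
R^* = \sum_x \min\cprn{\alpha_0 P_0(x), \alpha_1 P_1(x)} = \SIM\cprn{\alpha_0 P_0, \alpha_1 P_1},
\]
by the definition of $\SIM$ (extended in the natural way to sub-distributions, i.e., nonnegative measures of total mass at most $1$). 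The balanced case is then immediate by substituting $\alpha_0 = \alpha_1 = 1/2$ and factoring $1/2$ out of the minimum.

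There is essentially no obstacle: the argument is a two-line manipulation once the stated formula $R^* = \cexptq{\min(P(0|X), P(1|X))}{X}$ is in hand. The only thing worth being careful about is that $\alpha_0 P_0$ and $\alpha_1 P_1$ are sub-distributions rather than probability distributions, so I would briefly note that $\SIM$ here is interpreted as $\sum_x \min(\mu(x), \nu(x))$ for nonnegative measures $\mu, \nu$, which is consistent with the usage elsewhere in the paper (e.g., in the discussion of Bayes error in \Cref{sec:apps}, where $\alpha_i P(i|X)$ is explicitly described as a sub-distribution).
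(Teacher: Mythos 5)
Your proposal is correct and follows essentially the same route as the paper: both start from $R^* = \cexptq{\min(P(0|X), P(1|X))}{X}$, expand the expectation as a sum over $x$ weighted by $P(x)$, and convert $P(x)\,P(i|x)$ into $\alpha_i P_i(x)$ (the paper does this via Bayes' theorem followed by cancellation of $P(x)$; you do it by pushing $P(x)$ into the minimum and applying the chain rule, which is the same manipulation). Your extra remark that $\SIM$ is being applied to sub-distributions is a reasonable clarification but not a substantive difference.
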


\begin{proof}
The proof is a simple application of the Bayes theorem.
That is,
\begin{align*}
R^*
& = \cexptq{\min(P(0|X), P(1|X)}{X} \\
& = \sum_{x \in X} P(x) \cdot \min(P(0|x), P(1|x)) \\
& = \sum_{x} P(x) \cdot \min\!\left(\frac{P(x|0)P(0)}{P(x)}, \frac{P(x|1)P(1)}{P(x)}\right ) \\
& = \sum_{x} \min\cprn{{P(x|0)P(0)}, {P(x|1)P(1)}}
= \SIM(\alpha_0 P_0, \alpha_1 P_1).
\end{align*}
The balanced case follows from the fact that $\alpha_0 = \alpha_1 = \frac{1}{2}$.
\end{proof}

\section{Total Variation Distance and Statistical Similarity}

\label{sec:dtv_stv}

Statistical similarity can be proved to be equal to the complement of statistical distance, which is commonly called total variation distance (denoted by $\dtv$).
See below.

\begin{definition}
For distributions $P, Q$ over a sample space $D$, the \emph{total variation (TV) distance between $P$ and $Q$} is
\begin{align*}
\dtv(P,Q)
& := \sum_{x \in D} \max\cprn{0, P(x) - Q(x)}.
\end{align*}
\end{definition}

\begin{proposition}[Scheff\'{e}'s identity, see also \citep{tsybakov2009introduction}]
\label{prop:TV-complement-min}
Let $P, Q$ be distributions over a sample space $D$.
Then $\stv\cprn{P, Q} = 1 - \dtv\cprn{P, Q}$.
\end{proposition}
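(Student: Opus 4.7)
The plan is to reduce the identity to a simple pointwise algebraic fact and then sum. Specifically, for any two nonnegative reals $a, b$ one has the elementary identity $\min(a, b) = a - \max(0, a - b)$, which I would verify by the two cases $a \geq b$ (both sides equal $b$) and $a < b$ (both sides equal $a$).

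Applying this pointwise with $a = P(x)$ and $b = Q(x)$ and summing over $x \in D$, I get
\[
\stv(P, Q) = \sum_{x \in D} \min(P(x), Q(x)) = \sum_{x \in D} P(x) - \sum_{x \in D} \max(0, P(x) - Q(x)).
\]
The first sum on the right equals $1$ because $P$ is a probability distribution on $D$, and the second sum is exactly $\dtv(P, Q)$ by its definition in the excerpt. This directly yields $\stv(P, Q) = 1 - \dtv(P, Q)$.

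There is essentially no obstacle here; the only thing to be careful about is to use the asymmetric form $\dtv(P, Q) = \sum_x \max(0, P(x) - Q(x))$ as defined (rather than the symmetric $\frac{1}{2}\sum_x |P(x) - Q(x)|$ form), so that the pointwise identity matches term-by-term and no factor of $\tfrac{1}{2}$ appears. If one preferred the symmetric form, one could instead start from $\min(a, b) + \max(a, b) = a + b$ and $|a - b| = \max(a, b) - \min(a, b)$, solve to get $\min(a, b) = \tfrac{1}{2}(a + b - |a - b|)$, and sum to obtain the same identity; I would mention this as an alternative derivation but keep the one-line proof via $\min(a, b) = a - \max(0, a - b)$ as the main argument.
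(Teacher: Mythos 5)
Your proof is correct and is essentially identical to the paper's: the paper also reduces to the pointwise identity $\min\cprn{P(x), Q(x)} = P(x) - \max\cprn{0, P(x) - Q(x)}$ (written there as $P(x) + \min\cprn{0, Q(x) - P(x)}$) and sums over $D$, using the same asymmetric form of $\dtv$. No gaps; your remark about avoiding the symmetric $\frac{1}{2}\sum_x \abs{P(x) - Q(x)}$ form matches the paper's choice of definition.
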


\begin{proof}
We have that
\begin{align*}
\stv\cprn{P, Q}
& = \sum_{x \in D} \min\cprn{P\cprn{x}, Q\cprn{x}} \\
& = \sum_{x \in D} \min\cprn{P\cprn{x}, P\cprn{x} + Q\cprn{x} - P\cprn{x}} \\
& = \sum_{x \in D} P\cprn{x} + \sum_{x \in D} \min\cprn{0, Q\cprn{x} - P\cprn{x}} \\
& = 1 - \sum_{x \in D} \max\cprn{0, P\cprn{x} - Q\cprn{x}}
= 1 - \dtv\cprn{P, Q}.
\qedhere
\end{align*}
\end{proof}

\section{Proof of \texorpdfstring{\Cref{clm:help}}{}}

\label{sec:help}

We have
\begin{align*}
\SIM\cprn{Y_i \cdot Z_i}
& = \cexptq{\min\cprn{Y_i \cdot Z_i, 1}}{} \\
& = \cexptq{\sum_{j = 0}^m \min\cprn{Y_i \cdot Z_i, 1} \mathbbm{1}\csqbra{Y_i \in I_j}}{} \\
& = \cexptq{\sum_{j = 1}^m \min\cprn{Y_i \cdot Z_i, 1} \mathbbm{1}\csqbra{Y_i \in I_j}}{} + \cexptq{\min\cprn{Y_i \cdot Z_i, 1} \mathbbm{1}\csqbra{Y_i \in I_0}}{} \\
& \leq \sum_{j = 1}^m \cexptq{\min\cprn{Y_i \cdot Z_i, 1} \mathbbm{1}\csqbra{Y_i \in I_j}}{}  + \gamma B\\
& \leq \sum_{j = 1}^m \cexptq{\prn{\prn{1 + \delta} \min\cprn{\widetilde{Y}_i \cdot Z_i, 1}} \mathbbm{1}\csqbra{Y_i \in I_j}}{} +\gamma B\\
& = \sum_{j = 1}^m \prn{1 + \delta} \cexptq{\min\cprn{\widetilde{Y}_i \cdot Z_i, 1} \mathbbm{1}\csqbra{{Y}_i \in I_j}}{} + \gamma B \\
& = \prn{1 + \delta} \sum_{j = 1}^m \cexptq{\min\cprn{\widetilde{Y}_i \cdot Z_i, 1} \mathbbm{1}\csqbra{{Y}_i \in I_j}}{} + \gamma B\\
& \leq \prn{1 + \delta} \sum_{j = 0}^m \cexptq{\min\cprn{\widetilde{Y}_i \cdot Z_i, 1} \mathbbm{1}\csqbra{{Y}_i \in I_j}}{} + \gamma B \\
& = \prn{1 + \delta} \cexptq{\min\cprn{\widetilde{Y}_i \cdot Z_i, 1}}{} + \gamma B
= \prn{1 + \delta} \SIM\cprn{\widetilde{Y}_i \cdot Z_i} + \gamma B,
\end{align*}
The first part of the first inequality follows from the linearity of the expectation.
For the second part, note that since $I_0= (0, \gamma]$, the maximum value $Y_i$ can take in $I_0$ is at most $\gamma$.
The maximum value of $Z_i$ is $B$, thus $\cexptq{\min\cprn{Y_i \cdot Z_i, 1} \mathbbm{1}\csqbra{Y_i \in I_0}}{} \leq \gamma B$.
For the second inequality, suppose that $Y_i \in I_j = (\gamma(1+\delta)^{j-2}, \gamma(1+\delta)^{j-1}]$.
The maximum value $Y_i$ can take is $\gamma(1+\delta)^{j-1}$ and the $\widetilde{Y}_i$ is larger than $\gamma(1+\delta)^{j-2}$.
Thus $Y_i \leq \prn{1 + \delta} \widetilde{Y}_i$.

\end{document}